\pgfplotsset{compat=newest}
\newcounter{eqstep}
\newcommand{\sX}{\mathcal{X}}
 \newcommand{\utag}[2]{\mathop{#2}\limits^{\text{(#1)}}}
\newcommand{\uref}[1]{(#1)}
\newcommand{\pb}[1]{p\left(#1\right)}
\declaretheorem[name=Definition]{definition}
\newtheorem{thrm}{Theorem}
\newtheorem{prop}{Proposition}
\newtheorem{lem}{Lemma}
\newtheorem{exmp}{Example} 
\DeclarePairedDelimiter\parens{\lparen}{\rparen}  
\DeclarePairedDelimiter\norm{\lVert}{\rVert}
\DeclarePairedDelimiter\bracks{\lbrack}{\rbrack}
\newcommand{\set}[1]{\bracks*{#1}}
\newcommand{\E}[1]{\mathbb{E}\bracks*{#1}}
\renewcommand{\P}[1]{\mathbb{P}\parens*{#1}}
\newcommand{\independent}{\perp\!\!\!\perp}
\begin{document}
 \title{Perfect Privacy and Strong Stationary Times for Markovian Sources } 


\author{
\IEEEauthorblockN{
Fangwei Ye\IEEEauthorrefmark{1},
Zonghong Liu\IEEEauthorrefmark{2},
Parimal Parag\IEEEauthorrefmark{3},
Salim El Rouayheb\IEEEauthorrefmark{2}
}
\IEEEauthorblockA{\IEEEauthorrefmark{1}
College of Computer Science and Technology,  Nanjing University of Aeronautics and Astronautics, Nanjing, China \\ Email: fangweiye@nuaa.edu.cn
}
\IEEEauthorblockA{\IEEEauthorrefmark{2}
Department of Electrical and Computer Engineering,
Rutgers University,
New Brunswick, NJ, USA\\ Email: \{zonghong.liu, salim.elrouayheb\}@rutgers.edu
}

\IEEEauthorblockA{\IEEEauthorrefmark{3}
Department of Electrical Communication Engineering, Indian Institute of Science, 
Bengaluru, Karnataka, India\\
Email: parimal@iisc.ac.in
}
                    }


\maketitle
\begin{abstract}
We consider the problem of sharing correlated data under a perfect information-theoretic privacy constraint. We focus on redaction (erasure) mechanisms, in which data are either withheld or released unchanged, and measure utility by the average cardinality of the released set, equivalently, the expected Hamming distortion. Assuming the data are generated by a finite time-homogeneous Markov chain, we study the protection of the initial state while maximizing the amount of shared data. We establish a connection between perfect privacy and window-based redaction schemes, showing that erasing data up to a strong stationary time preserves privacy under suitable conditions. We further study an optimal sequential redaction mechanism and prove that it admits an equivalent window interpretation. Interestingly, we show that both mechanisms achieve the optimal distortion while redacting only a constant average number of data points, independent of the data length~$N$.

\end{abstract}
\begin{IEEEkeywords}
Information theoretic privacy, Markov chain, strong stationary time.
\end{IEEEkeywords}

\section{Introduction}
 
 A data owner holds a dataset $\{X_n\}_{n=0}^N$ consisting of possibly correlated data points and seeks to maximize the information disclosed to a third party (e.g., a data broker), subject to a perfect privacy constraint. Let $\mathcal{S} \subseteq \{0,1,\dots, N\}$ denote the indices of data points that must remain private, and let $X_{\cal S} = \{X_i: i \in \cal S\}$. Because correlations among data points may cause indirect leakage of private information, the released dataset $Y$ is obtained by selectively redacting the original data so as to ensure, if possible, that $Y$ reveals no information about $X_{\cal S}$, i.e., $X_{\cal S}$ and $Y$ are statistically independent, and
\begin{equation}
I(X_{\cal S}; Y) = 0.
\end{equation}

This setup is motivated by a range of data-sharing applications in which a third party wants to acquire a faithful (unmodified) subset of the data, rather than a perturbed or noisy version. Accordingly, we focus on \emph{redaction- or erasure-based data sharing mechanisms}, in which certain data points are withheld while the remaining data are released unchanged.

For example, the data owner may be a social network in which data points correspond to individuals, and a subset of users has opted to remain private, as required by regulations such as the General Data Protection Regulation (GDPR) and the California Consumer Privacy Act (CCPA). In this setting, data may be shared with a broker or used for targeted advertising, subject to strict privacy guarantees. In another scenario, data points may represent features of individuals or groups, where certain attributes (e.g., salary or clinical data) may be disclosed, while sensitive attributes such as race or gender must remain private.

In the case where the data points $X_i$ are independent, the problem admits a straightforward and optimal solution: the data owner can share all $X_i$ such that $i \notin {\cal S}$. The setting becomes more interesting when the data points are correlated. To make the model more specific and gain analytical insight, we focus on the case where $\{X_n\}_{n=0}^N$ forms a time-homogeneous Markov chain,
\begin{equation}
X_0 \to X_1 \to \cdots \to X_N,\label{mc}
\end{equation}
and the data owner seeks to protect, for example, $X_0$, i.e., ${\cal S} = \{0\}$. In this setting, regardless of how large $N$ is, every data point $X_i$ with $i \neq 0$ is generally correlated with $X_0$, and releasing it would therefore leak information about the private data. At first glance, this suggests that the data owner may be unable to share any data without violating the perfect privacy constraint.


In this paper, we show that this intuition does not hold in general. Specifically, we present two data-sharing schemes that achieve perfect privacy while, on average, redacting only a constant number of data points, independent of $N$. Our main idea is a \textit{data-dependent} randomized redaction window strategy in which all data points within a randomly sized window, whose length depends on the realized data, are withheld, while all data outside this window are released unchanged. Towards that end, we establish a novel connection to the theory of strong stationary times for Markov chains~\cite{levin2017markov}, which characterize stopping times that yield exact samples from the stationary distribution and which were originally motivated by the classical card shuffling problem~\cite{aldous1986shuffling}. 


\subsection{Related Works}

The tension between correlation and privacy has been studied in the context of differential privacy (DP). The standard DP formulation is distribution-agnostic and was originally developed for i.i.d. data (see, e.g., \cite{song2017pufferfish}).  Numerous DP variants have been proposed to account for correlations in the data \cite{kifer2011no, cao2017quantifying, he2014blowfish, zhu2014correlated, chen2014correlated, yang2015bayesian, liu2016dependence, gedik2007protecting, xiao2015protecting, song2017pufferfish}. Addressing correlation is more intuitive to handle under DP, since the framework permits a controlled amount of information leakage. For example, in the Markov chain setting considered here, correlations decay over time, and \(\epsilon\)-DP can be achieved by releasing all data beyond a fixed time threshold that depends only on the data distribution and the privacy budget \(\epsilon\)\footnote{This scheme is actually not optimal as shown in \cite{massny2025between}.}. Our work differs from this body of literature in two fundamental ways: first, privacy is required only for a subset of users or data points; and second, those that do require privacy insist on perfect privacy, i.e., zero information leakage.

Our problem can be viewed as a variant of the \emph{privacy funnel} (PF) \cite{makhdoumi2014information, kung2018compressive}, which itself is the dual of the \emph{information bottleneck} (IB) compression problem \cite{tishby2000information}. In the PF framework, given correlated private data \(X\) and public data \(Y\), the objective is to design a released variable \(U\),  satisfying the Markov constraint \(X \rightarrow Y \rightarrow U\), that maximizes the utility characterized there by the mutual information \(I(U; Y)\) (instead of the cardinality of released dataset as in here),  subject to a relaxed privacy constraint \(I(U; X) \le \epsilon\). Subsequent works have characterized the fundamental limits of the privacy funnel and established achievability results under general joint distributions \cite{calmon2015fundamental, zamani2023new, zamani2024statistical, shkel2020secrecy, rassouli2023information}.

The scenario in which data has heterogeneous privacy requirements—where only a subset of requests must remain private—was first explored in earlier work, including studies motivated by genomic data privacy \cite{ye2022mechanisms}. The genomic privacy problem concerns the release of genomic data while ensuring perfect privacy for parts of it that may be correlated with predisposition to certain diseases \cite{ye2022mechanisms}. This notion of selective privacy was first formalized in the context of \emph{On--Off privacy} \cite{naim2019off, ye2019preserving, ye2021off, ye2021onoff}. In this setting, which arises in private information retrieval, user requests may be correlated, and the challenge is to design queries that enable data retrieval while concealing the private requests.  



One of the main results of this paper is to establish a novel connection between the notion of perfect privacy and the classical concept of strong stationary times (SSTs), originally introduced in \cite{aldous1986shuffling, aldous1987strong}. SSTs have been extensively studied as a probabilistic tool for analyzing properties of Markov chains, such as mixing times \cite{diaconis1990strong, levin2017markov, aldous-fill-2014} and exact sampling algorithms \cite{lovasz1995efficient, fill1997interruptible}, among others. We show that, in certain cases, SSTs provide a natural framework for constructing data-release mechanisms that achieve perfect privacy for Markovian data models.

\subsection{Contributions}

We study the problem of data sharing under a perfect privacy constraint when the underlying data are correlated. 
We focus on the simplest nontrivial model of correlated data, namely a time-homogeneous Markov chain as in \eqref{mc}. In this model, we investigate whether it is possible to preserve the privacy of the initial state \(X_0\) while releasing a subset of the remaining data. For instance, if the Markov chain represents a random walk on a finite graph, the privacy constraint corresponds to hiding the starting node, as required, for example, to guarantee anonymity in a routing protocol. This setting serves as a natural first step toward more general problems involving the protection of an arbitrary subset \(S\) of data and richer models of data correlation. We make the following  contributions:
 
\noindent  \textit{Perfect privacy via SST-Mechanisms.}   
We propose a data-sharing scheme that redacts all data points within a random window $[0,\tau-1]$ and releases the remaining data $(X_\tau, X_{\tau+1}, \ldots, X_N)$ unchanged, where $\tau$ is chosen to be a strong stationary time (SST) of the Markov chain. Intuitively, a strong stationary time $\tau$ is a randomized stopping time at which the state of the Markov chain $X_\tau$ is exactly distributed according to its stationary distribution and is independent of $\tau$. We give a sufficient condition on the transition matrix $P$ under which this SST-mechanism achieves perfect privacy, and show that, it achieves optimal utility.

\noindent  \textit{Sequential Markov Redaction Mechanisms for General Chains.}  
Our SST-based approach  serves as a key inspiration for constructing our optimal Sequential Markov Redaction SMR-Mechanism. This mechanism   also operates by redacting data within a random window $[0,T-1]$, where $T$ is a refinement of the SST concept designed to ensure the privacy of $X_0$. This optimal scheme can be viewed as a reinterpretation of earlier work originally developed in the context of genomic privacy~\cite{ye2022mechanisms}. For this scheme, we provide a complete characterization of the achievable utility and show that it redacts, on average, only a constant number of data points, depending only on the transition matrix $P$.

\section{Problem Setting}
Let $\{X_i\}_{i=0}^N$ be the original data held by the data owner, $\{Y_i\}_{i=0}^N$ be an obfuscation of the data points $\{X_i\}_{i=0}^N$.
The privacy mechanism is determined by the conditional distribution $\P{ Y_{[0:N]} \mid X_{[0:N]}}$, illustrated by the following channel  
\begin{align}
    (X_0,...,X_N)\rightarrow\boxed{\P{Y_{[0:N]}\mid  X_{[0:N]}}}\rightarrow (Y_1,...,Y_N),\label{channel}
\end{align}
adhering to the perfect privacy constraint: 
\begin{equation}
\label{pp}
I(X_0; Y_0, Y_1, \ldots,Y_N) = 0.
\end{equation}
To achieve perfect privacy, we consider employing a redaction mechanism that selectively erases data points prior to sharing. As such, each shared data point, denoted by $Y_i$, consists of either the original data point  $X_i$  or an erasure symbol $\ast$, such that $Y_i \in \{X_i, \ast\}$.

To make the data correlation concrete, we consider the scenario where $\{X_i\}_{i=0}^N$ forms a time-homogeneous (discrete-time) Markov chain 
\begin{equation}
    X_0 \to X_1 \to \cdots \to X_N.\label{chain}
\end{equation}
Let $P$ denote the transition matrix of the underlying Markov chain, where $P(x,y)=\P{X_{1}=y\mid X_0=x}$, then the $t$-step transition probability $\P{X_t=y\mid X_0=x}$ is given by the $(x,y)$ element of $t$-th power of $P$, i.e., $P^t(x,y)$. In this work, we assume all Markov chains are defined on a finite state space $\sX$, and are aperiodic and irreducible. 

We measure the utility of the redaction mechanism by the amount of un-redacted data. 
Formally, the utility is the expected Hamming distance between the original and the released data, i.e.,  $$\E{d_H(X_{[0:N]}, Y_{[0:N]})}\triangleq\E{\big| \{t: Y_t=\ast,\,0\leq t\leq N\}\big|},$$  where the expectation is taken over the distribution of the data $X_{[0:N]}$ and the randomness of the mechanism. 

\section{Background on  Strong Stationary Times}

 We provide a brief overview of strong stationary times, following the standard exposition in \cite[Chapter~6]{levin2017markov}. Let $\{X_t\}_{t \ge 0}$ be a Markov chain with a stationary distribution $\pi$.
\begin{definition}
A random time $\tau$ is called a \emph{strong stationary time} for $\{X_t\}_{t \ge 0}$ starting from the initial state $X_0 = x$ if
\begin{align} 
    \mathbb{P}(\tau = t,\; X_\tau = y\mid X_0=x)
=
\mathbb{P}(\tau = t\mid X_0=x)\,\pi(y),\label{eq:SST}
\end{align}
for all $t \ge 0$ and all $y \in \mathcal{X}$.
\end{definition}

\begin{exmp}{(Top-to-random shuffle \cite{diaconis1992analysis}).} \normalfont 
    Consider a deck of $n$ cards that is repeatedly shuffled by removing the top card and reinserting it uniformly at random into one of the $n$ possible positions in the deck.  
This shuffling procedure can be modeled as a Markov chain over the set of all $n!$ permutations. Let $X_t$ denote the permutation after $t$ shuffling steps. Define $\tau$ as the first time at which the card that was initially at the bottom of the deck appears at the top and is reinserted. At this random time, one can show that the distribution of the deck configuration $X_\tau$ is uniformly distributed over all $n!$ permutations and is independent of  
the stopping time $\tau$ \cite[Chapter 6]{levin2017markov}. Consequently, $\tau$ is a strong stationary time. 
 \end{exmp}

A strong stationary time ensures that the Markov chain has reached its stationary distribution irrespective of its initial state $X_0$; hence, the potential connection to privacy. Therefore,  an implied mechanism to keep $X_0$ private would be to redact all data up to time $\tau-1$ and release the remaining data $(X_\tau, X_{\tau+1}, \ldots, X_N)$. While this guarantees privacy in the previous example, in general, this is not true. The reason is that privacy requires, in addition to the SST condition in \eqref{eq:SST}, that the stopping time $\tau$ itself be independent of the initial state. Equivalently,
\[
\mathbb{P}(\tau=t\mid X_0=x)=\mathbb{P}(\tau=t), 
\]
a condition that is not implied by the SST definition.


 
\begin{exmp}  \normalfont
    Consider the following Markov chain:
\begin{align*}
    \mathcal X=\{0,1\},\quad 
    P=\begin{pmatrix}
        1/2 & 1/2\\
        0 & 1
    \end{pmatrix},\quad 
    \pi=(0,1).
\end{align*}
One can construct a strong stationary time (SST) as follows.  
If $X_0=0$, define
\[
\tau=\min\{t\geq 0 : X_t=1\},
\]
so that $\tau\sim\operatorname{Geom}(1/2)$. Whereas, if $X_0=1$, set $\tau=1$ with probability one. Clearly, when $\tau=1$, it is more likely that $X_0=1$; that is, $\tau$ leaks information about $X_0$. On the other hand, consider an alternative random time $\tau'$. When $X_0=0$, define $\tau'$ in the same way as $\tau$. However, when $X_0=1$, draw $\tau'\sim\operatorname{Geom}(1/2)$ independently of the Markov chain. One can check that $\tau'$ is also an SST whose distribution is independent of $X_0$, and thus the pair $(X_{\tau'},\tau')$ preserves the privacy of $X_0$.
 \end{exmp}



  The above example also shows that a given Markov chain may admit multiple strong stationary times. 
  Aldous and Diaconis \cite{aldous1987strong} give a general construction of an \emph{optimal} SST $\tau$ that minimizes the expectation $\mathbb{E}[\tau]$ for a given initial state $x$. This construction helps identify conditions under which an SST can preserve privacy. See Section~\ref{sec:sst} for details.






\section{Main Results}

We present two redaction-based mechanisms for sharing Markovian data under perfect privacy for $X_0$ and characterize their utility:
\begin{enumerate}
    \item An SST-based mechanism, which uses SST as a release time and is private for \textit{transitively invariant} chains.
    \item A sequential Markov redaction (SMR) mechanism, which achieves perfect privacy for any finite Markov chain.
\end{enumerate}
Both mechanisms can be viewed as randomized, data-dependent window schemes that erase a prefix of the trajectory and release the remaining samples unchanged.

\subsection{SST-mechanism for transitively invariant chain}\label{sec:sst}
 
We start by introducing the construction of optimal SST given in \cite{aldous1987strong}. In this section, we assume all Markov chains under consideration are \emph{time-homogeneous, finite, aperiodic,} and \emph{irreducible}. We denote $P^t(x,y)=\P{X_t=y\mid X_0=x}.$
%

\vspace{.5mm}
\noindent\textbf{Construction of optimal SST.}    
Consider a time-homogeneous Markov chain $P$ on a finite state space $\sX$ starting from $x_0$. Let
\begin{align*}
    a_t^{x_0} = \min_{x\in\sX}\frac{P^t(x_0,x)}{\pi(x)},
\end{align*} 
and $\{U_t\}_{t=1}^N$ be a sequence of i.i.d. random variables uniformly distributed over the interval $(0,1)$  
 that is independent of $\{X_t\}_{t=0}^N$. Then the following random time \cite[Sec. 6.7]{levin2017markov} 
\begin{align}
    \tau = \min\set{t\geq 1\mid U_t\leq\frac{a_t^{x_0}-a_{t-1}^{x_0}}{{P^t(x_0,X_t)}/{\pi(X_t)}-a_{t-1}^{x_0}}},\label{sst}
\end{align}
is an SST   of $\{X_t\}_{t=0}^N$ and satisfies:  
\begin{align}
   \mathbb{P}(\tau = t, X_\tau = y\mid X_0=x_0)=\pi(y)(a_t^{x_0}-a_{t-1}^{x_0}).\label{joint}
\end{align}
We have naturally, from the definition of SST in \eqref{eq:SST}: 
\begin{align}
    \mathbb{P}(\tau = t\mid X_0=x_0)&= a_t^{x_0}-a_{t-1}^{x_0}.\label{tau}
\end{align}

From \eqref{tau}, we observe that a sufficient condition for $\tau$ to be independent of $X_0$ is that $a_t^{x_0} - a_{t-1}^{x_0}$ is independent of $X_0$ for all $t$. This condition is implied if $a_t^{x_0} = \min_x P^t(x_0,x)/\pi(x)$ is independent of $x_0$ for all $t$, that is, if $a_t^{x_0}$ is indistinguishable across all initial states. A sufficient, though not necessary, condition to formalize such indistinguishability is symmetry induced by a transitive group action on the state space, which motivates the following definition, which has been studied in the task of random walk on groups \cite{diaconis1988group}.



 

\begin{definition}
    Let $\sX$ be a finite state space and $P$ a Markov transition matrix on $\sX$.
If there exists a group $G$ acting transitively on $\sX$, i.e., for any $x,y\in\sX$, there exists $g\in G$ satisfies $x=gy$, and the transition matrix $P$ satisfies
\[
P(gx,gy)=P(x,y), \qquad \forall g\in G,\ \forall x,y\in\sX,
\]
we call the Markov chain $P$ a transitively invariant chain. 
\end{definition}
Examples of transitively invariant Markov chains are those that can be viewed as a simple random walk on Cayley graphs (e.g., $d$-dimensional hypercubes, in which case the group $G$ can be taken to be the set of all cyclic shifts of the $d$ coordinates). From the perspective of transition matrices, any circulant transition matrix and its row permutations are transitively invariant. 

We are now ready to state our first result. Let $\tau$ be an optimal SST according to \eqref{sst}. We define the \textbf{SST-Mechanism} to be a mechanism that redacts data up to $\tau$ and releases the rest, i.e.,
\begin{align}
        Y_t =\begin{cases}
            \ast & \textit{ if } t<\tau,\\
            X_t & \textit{ otherwise},
        \end{cases}\label{sstscheme}
    \end{align}
where $\tau$ is constructed by \eqref{sst}.

\begin{thrm}[SST-Mechanism]\label{thm:sst}
    If $\{X_t\}_{t=0}^N$ forms a transitively invariant Markov chain and the initial data point $X_0$ is to be  protected,
    the SST-mechanism
    satisfies the perfect privacy constraint in \eqref{pp}.
       Moreover, it achieves the minimum distortion given by:
        \begin{align} 
    \E{d_H(X_{[0:N]},Y_{[0:N]})}=&N-\sum_{t= 0}^N |\sX| \min _{x \in \sX} P^t(x_0, x) , \label{sstdistortion}
        \end{align}
        which is independent of $x_0$.
         \end{thrm}

\subsection{SMR-mechanism for   general chains}

We observe that the privacy constraint \eqref{pp} can be expanded by the chain rule as
\[I(X_0;Y_0,Y_1,\ldots,Y_N) = \sum_{t = 0}^{N} I(X_0;Y_t|Y_{[0:t-1]}).\]
This observation gives rise to a sequential mechanism for generating $Y_t$ according to previously generated $Y_0, \ldots, Y_{t-1}$ such that $I(X_0;Y_t|Y_{[0:t-1]}) = 0$ is satisfied for all $t$. This motivates the following construction of the Sequential Markov Redaction \textbf{SMR-Mechanism}, which redacts
 the data one by one and designs the redaction probability based on previously generated $Y_{[0:t-1]}$, $X_0$, and $X_t$. More specifically, let
\begin{align}
    q_t = \frac{\min_{u \in \mathcal{X}} \P{X_t = x_{t}|X_{0}=u,Y_{[0:t-1]}=y_{[0:t-1]}}}{\P{X_t = x_{t}|X_{0}=x_0,Y_{[0:t-1]}=y_{[0:t-1]}}}.\label{eq:eraseprob}
\end{align}
The SMR mechanism releases the data $Y_t$ according to
\begin{equation}
Y_t =
\begin{cases}
\ast, & \text{with probability } 1- q_t, \\
X_t, & \text{otherwise}.
\end{cases}
\label{eq:qdef}
\end{equation}



\begin{thrm}[SMR-Mechanism]
\label{thm}
If $\{X_t\}_{t=0}^{N}$ forms a Markov chain and the initial data point $X_0$ is to be protected, then the SMR-mechanism defined in \eqref{eq:qdef} satisfies the perfect privacy constraint in \eqref{pp}. Moreover, it achieves the optimal Hamming distortion given by
\begin{align}\label{SMRBound}
 \E{d_H(X_{[0:N]}, Y_{[0:N]})} =  
 \sum_{t=0}^{N} \left( 1 - \sum_{x \in \mathcal{X}} \min_{u \in \mathcal{X}} P^t(u,x) \right). 
\end{align} 
\end{thrm}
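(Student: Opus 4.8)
The plan is to establish the three assertions of the theorem—perfect privacy, the exact distortion \eqref{SMRBound}, and its optimality—separately, reusing for privacy the chain-rule identity $I(X_0;Y_{[0:N]})=\sum_{t=0}^N I(X_0;Y_t\mid Y_{[0:t-1]})$ stated just above, and for the utility a forward induction on the ``survival'' sub-measures of the chain. For privacy I would show each summand vanishes by checking that the conditional law of $Y_t$ given $(X_0,Y_{[0:t-1]})$ does not depend on $X_0$. Writing $H$ for the conditioning event $\sets{Y_{[0:t-1]}=y_{[0:t-1]}}$, the one computation that matters is that the definition of $q_t$ telescopes for a released value $x_t\neq\ast$:
\begin{align*}
\P{Y_t=x_t\mid X_0=x_0,H}
&=q_t\,\P{X_t=x_t\mid X_0=x_0,H}\\
&=\min_{u\in\sX}\P{X_t=x_t\mid X_0=u,H},
\end{align*}
which is manifestly independent of $x_0$; the erasure probability $\P{Y_t=\ast\mid X_0=x_0,H}=1-\sum_{x_t}\min_u\P{X_t=x_t\mid X_0=u,H}$ is then independent of $x_0$ as well. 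Since $q_t\le1$ (the minimum over $u$ includes $u=x_0$) this is a genuine conditional distribution, so $Y_t\perp X_0\mid Y_{[0:t-1]}$ and the sum is zero. A short remark would restrict the minimum to initial states under which $y_{[0:t-1]}$ is reachable, to avoid conditioning on null events.

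For the distortion I would first record the resulting window structure. If $s<t$ is the first index with $Y_s\neq\ast$, then an induction shows every later symbol is also released: by the Markov property, conditioning on the (already released) values $X_s,\dots,X_{t-1}$ makes $X_t$ independent of $X_0$, so $q_t=1$ and $Y_t=X_t$. Hence erasures form a prefix $[0,T-1]$, where $T$ is the first release time, $\sets{Y_t=\ast}=\sets{T>t}$, and $\E{d_H(X_{[0:N]},Y_{[0:N]})}=\sum_{t=0}^N\P{T>t}$. The core of the argument is a forward induction on $\nu_t^u(x):=\P{Y_{[0:t-1]}=(\ast,\dots,\ast),\,X_t=x\mid X_0=u}$. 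Setting $w_t(x):=\min_u\nu_t^u(x)$, the release rule gives the one-step recursion $\nu_{t+1}^u(x')=\sum_x\bigl(\nu_t^u(x)-w_t(x)\bigr)P(x,x')$, and I claim the invariant $\nu_t^u(x)=P^t(u,x)-D_t(x)$ with $D_t$ \emph{independent of $u$}. This holds at $t=0$ with $D_0\equiv0$, and the recursion propagates it with $D_{t+1}(x')=\sum_x\bigl(D_t(x)+w_t(x)\bigr)P(x,x')$; in particular the normalizer $\sum_x\nu_t^u(x)=1-\sum_xD_t(x)$ is $u$-independent, which is exactly what legitimizes cancelling the $X_0$-dependence inside $q_t$.

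With the invariant in hand the distortion formula is a short telescoping computation. The quantity $w_t(x)=\min_uP^t(u,x)-D_t(x)$ equals $\P{T=t,X_t=x}$, and $\sum_xD_t(x)=\P{T\le t-1}$ since $\sum_x\nu_t^u(x)=\P{T\ge t}$. Summing $w_t$ over $x$ gives $\P{T=t}=\sum_x\min_uP^t(u,x)-\P{T\le t-1}$, hence $\P{T\le t}=\sum_x\min_uP^t(u,x)$ and $\P{Y_t=\ast}=1-\sum_x\min_uP^t(u,x)$; summing over $t$ yields \eqref{SMRBound}. For optimality I would give a converse that uses only the marginal consequence $I(X_0;Y_t)=0$ of \eqref{pp}: any redaction mechanism satisfies $\sets{Y_t=x}\subseteq\sets{X_t=x}$, so $\P{Y_t=x}=\P{Y_t=x\mid X_0=u}\le P^t(u,x)$ for every $u$, whence $\P{Y_t=x}\le\min_uP^t(u,x)$ and $\P{Y_t=\ast}\ge1-\sum_x\min_uP^t(u,x)$. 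Summing lower-bounds the distortion of every perfectly private redaction mechanism by \eqref{SMRBound}, which SMR attains.

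The step I expect to be the main obstacle is the inductive invariant of the second paragraph: showing that the released mass $w_t$ and the deficit $D_t$ are genuinely independent of the initial state $X_0$ (equivalently, that the all-erasure survival probability is). This is the precise point at which the Markov structure of the chain and the $\min_u$ form of $q_t$ must interact so that the $X_0$-dependent normalizer inside $q_t$ cancels; everything else is either the chain rule, the Markov property, or a telescoping sum.
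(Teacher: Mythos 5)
Your proposal is correct, and it is in fact more self-contained than the paper's own treatment: the paper does not prove Theorem~\ref{thm} in-line at all, deferring both privacy and optimality to Corollary~1 of \cite{ye2022mechanisms}, and proving only the window interpretation (Theorem~\ref{thm:window}) via Lemma~\ref{lem1}. Your three ingredients line up with that structure as follows. The privacy step (the cancellation $q_t\,\P{X_t=x_t\mid X_0=x_0,H}=\min_{u}\P{X_t=x_t\mid X_0=u,H}$, independent of $x_0$) and the converse (for any perfectly private redaction mechanism, $\P{Y_t=x}=\P{Y_t=x\mid X_0=u}\le P^t(u,x)$ for all $u$, hence $\P{Y_t=\ast}\ge 1-\sum_x\min_u P^t(u,x)$) are exactly the two pieces the paper omits, and both are sound. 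Your sub-measure invariant $\nu_t^u(x)=P^t(u,x)-D_t(x)$ with $D_t$ independent of $u$ is the unnormalized counterpart of the paper's recursion \eqref{eq:appendix-recur}: your update gives $D_{t+1}(x')=\sum_x\bigl(D_t(x)+w_t(x)\bigr)P(x,x')=\sum_x P(x,x')\min_u P^t(u,x)$, which is precisely the deficit term appearing in \eqref{eq:appendix-recur}, while working with joint sub-probabilities instead of conditionals spares you the normalizing product $\prod_k(1-s_k)^{-1}$ that the paper's induction must carry. Two points deserve explicit care in a write-up: first, the one-step recursion $\nu_{t+1}^u(x')=\sum_x\bigl(\nu_t^u(x)-w_t(x)\bigr)P(x,x')$ is only valid because $\P{Y_{[0:t-1]}=\ast^t\mid X_0=u}$ is $u$-independent, so the induction must propagate the invariant and the recursion jointly (you acknowledge this, but it is the load-bearing step); second, both your converse and the theorem's claim of optimality implicitly require $X_0$ to have full support on $\mathcal{X}$ (otherwise the minimum may legitimately be taken only over the support, and the stated bound need not be tight) — a caveat the paper shares.
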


The SMR-mechanism is a revisit of the privacy mechanism in \cite{ye2022mechanisms} in the context of the haplotype hiding problem in genomics, where we showed that it is private and   optimal in the sense of minimizing the Hamming distortion. The proof can be adapted from Corollary~1 in \cite{ye2022mechanisms} and the details are omitted here. 

In light of the previous connection to SSTs, we will prove that the SMR-Mechanism \eqref{eq:qdef} can also be viewed as a window-based redaction scheme. It redacts all data points within the window starting from $X_0$ and releases the data points outside the window. We denote by $T$ the length of the window.%
\begin{thrm}[Window Interpretation]
\label{thm:window} The SMR-mechanism is given by
\begin{align}
        Y_t =\begin{cases}
            \ast & \textit{ if } t<T,\\
            X_t & \textit{ otherwise,}
        \end{cases} \label{smrscheme}
    \end{align}
where $T$ is an integer random variable, depending on $X_{[1,N]}$ but not on $X_0$, with marginal distribution 
\begin{equation}
\label{eq:winsize}
    \P{T = t} = \alpha_{t} - \alpha_{t-1},
\end{equation}
for $t = 1,2,\ldots,N+1$, 
where
\begin{equation}
    \alpha_t=\sum_{v \in \mathcal{X}} \min _{u \in \mathcal{X}} \P{X_t = v \mid X_0=u }, t = 0,1,\ldots,N,
\end{equation}
are parameters completely determined by the transition matrix $P$ of the given Markov chain. 
\end{thrm}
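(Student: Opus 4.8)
The plan is to establish two things: first, that the SMR mechanism actually redacts a contiguous prefix, so its release pattern really is a window $[0,T-1]$ of erasures followed by unmodified samples; and second, that the window length $T := \min\{t : Y_t \neq \ast\}$ has the claimed marginal $\alpha_t-\alpha_{t-1}$ and does not depend on $X_0$. The first item is what makes the reinterpretation \eqref{smrscheme} legitimate, while the second is the quantitative content.

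For the window structure, I would first show that once a sample is released it is never erased again, i.e. $q_t = 1$ whenever $Y_{t-1} = X_{t-1}$. On the event $\{Y_{t-1} = X_{t-1} = x_{t-1}\}$ the conditioning event $Y_{[0:t-1]} = y_{[0:t-1]}$ pins down $X_{t-1} = x_{t-1}$, and since $Y_{[0:t-1]}$ is a function of $(X_{[0:t-1]}, \text{erasure coins})$ with the coins independent of the chain, the Markov property gives $X_t \perp (X_0, Y_{[0:t-1]}) \mid X_{t-1}$. Hence $\mathbb{P}(X_t = x_t \mid X_0 = u, Y_{[0:t-1]} = y_{[0:t-1]}) = P(x_{t-1},x_t)$ for every $u$, so the minimum in the numerator of $q_t$ equals its denominator and $q_t = 1$. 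Together with $q_0 = 0$ (which forces $Y_0 = \ast$), a short induction shows the release pattern is a prefix of $\ast$'s followed by unmodified samples, so $T$ is a genuine window length taking values in $\{1,\ldots,N+1\}$.

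For the marginal, I would reduce the claim to the survival probability $\mathbb{P}(T > t \mid X_0 = u) = \mathbb{P}(Y_0 = \cdots = Y_t = \ast \mid X_0 = u)$ and track the joint quantities $h_t(v,u) := \mathbb{P}(X_t = v,\, Y_0 = \cdots = Y_{t-1} = \ast \mid X_0 = u)$ and $f_t(v,u) := \mathbb{P}(X_t = v,\, Y_0 = \cdots = Y_t = \ast \mid X_0 = u) = h_t(v,u)(1-q_t)$, which satisfy the one-step recursion $h_{t+1}(v',u) = \sum_v f_t(v,u)\,P(v,v')$. Substituting the definition of $q_t$ together with $\mathbb{P}(X_t = v \mid X_0 = u, Y_{[0:t-1]} = (\ast,\ldots,\ast)) = h_t(v,u)/\sum_{v'} h_t(v',u)$, I would prove by induction the invariant $h_t(v,u) = P^t(u,v) - r_t(v)$, where the correction $r_t(v)$ is independent of the initial state $u$ (with $r_0 \equiv 0$ and $r_{t+1}(v') = \sum_v \min_w P^t(w,v)\,P(v,v')$).

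The main obstacle, and the crux of the argument, is exactly this invariant: a priori the erasure probabilities depend on $X_0$ through the all-erased conditioning in a complicated way, and one must show that this dependence enters only through the state-function $r_t(v)$. Granting the invariant, two simplifications fall out. The normalizer $\sum_v h_t(v,u) = 1 - \sum_v r_t(v)$ is independent of $u$, which simplifies $q_t$; and, crucially, the correction cancels inside $f_t$, giving $f_t(v,u) = P^t(u,v) - \min_w P^t(w,v)$, independently of $u$. Summing over $v$ and using $\sum_v P^t(u,v) = 1$ then yields $\mathbb{P}(T > t \mid X_0 = u) = 1 - \alpha_t$, which is independent of $u$; differencing gives $\mathbb{P}(T = t) = \alpha_t - \alpha_{t-1}$ for $t \le N$, with the residual mass $1 - \alpha_N$ sitting at $t = N+1$ (consistent with the convention $\alpha_{N+1} = 1$). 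Independence of $T$ from $X_0$ is then immediate, either from this pointwise computation or, more directly, from Theorem~\ref{thm}, since $T$ is a deterministic function of $Y_{[0:N]}$.
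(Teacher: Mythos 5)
Your proof is correct and takes essentially the same route as the paper's: your prefix-structure argument reproduces the paper's two observations (forced erasure at $t=0$, and $q_t=1$ once $Y_{t-1}\neq\ast$ by the Markov property), and your inductive invariant $h_t(v,u)=P^t(u,v)-r_t(v)$, with $r_{t+1}(v')=\sum_{v}\min_{w}P^t(w,v)\,P(v,v')$ independent of $u$, is exactly the paper's key recursion \eqref{eq:appendix-recur} for $p\left(x_t \mid x_0, y_{[0:t-1]}\right)$, restated in unnormalized joint form. The only difference is bookkeeping: tracking $h_t,f_t$ makes the correction cancel and gives the survival function $\P{T>t \mid X_0=u}=1-\alpha_t$ in one step, whereas the paper carries the normalizer $\prod_{k}(1-s_k)^{-1}$ and closes with an additional induction showing $s_i=(\alpha_i-\alpha_{i-1})/(1-\alpha_{i-1})$; both arguments are substantively identical.
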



Equation~\eqref{SMRBound} in Theorem~\ref{thm} fully characterizes the optimal utility and coincides with the distortion expression in Theorem~\ref{thm:sst} for transitively invariant chains. However, this characterization is not transparent and offers limited intuition.  
Therefore, we present the following theorem, which provides a transparent upper bound on the distortion in terms of the spectral properties of $P$. More importantly, the upper bound implies that only a constant average number of data points (independent of $N$) will be redacted. 
\begin{thrm}\label{thm:upper}
For an ergodic chain $P$, the optimal Hamming distortion for a redaction-based scheme that achieves perfect privacy constraint \eqref{pp}, which protects $X_0$, satisfies 
    \begin{align*}
        \E{d_H(X_{[0:N]}, Y_{[0:N]})}\leq\frac{|\sX|}{2\sqrt{\pi_{\min} }(1-\sqrt{\lambda})},
    \end{align*}
where $\lambda$ is the second largest eigenvalue of the multiplicative reversiblization $P\hat{P}$, $\hat{P}$ is the time reversal of $P$ such that $\hat{P}(x,y)=\frac{\pi(y)P(y,x)}{\pi(x)}$, and $\pi_{\min}=\min_{x\in\sX}\pi(x)$. 
\end{thrm}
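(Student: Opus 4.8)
The plan is to control the exact optimal distortion supplied by Theorem~\ref{thm}. By \eqref{SMRBound}, the optimal redaction distortion equals $\sum_{t=0}^{N}(1-\alpha_t)$ with $\alpha_t=\sum_{x\in\sX}\min_{u\in\sX}P^t(u,x)$, so it suffices to bound each summand by a term decaying geometrically in $t$ and then sum the series. First I would note that stationarity gives $\pi(x)=\sum_{u}\pi(u)P^t(u,x)\ge\min_u P^t(u,x)$, so every summand is nonnegative and $1-\alpha_t=\sum_x\bigl(\pi(x)-\min_u P^t(u,x)\bigr)=\sum_x\max_u\bigl(\pi(x)-P^t(u,x)\bigr)\le\sum_x\max_u\abs{P^t(u,x)-\pi(x)}$. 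Bounding the $\abs{\sX}$ terms of this sum by their maximum yields $1-\alpha_t\le\abs{\sX}\max_{x,u}\abs{P^t(u,x)-\pi(x)}$.

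The next step is a reduction to a worst-case mixing distance. For any fixed starting state $u$, a single-coordinate discrepancy is dominated by the total-variation distance, $\max_x\abs{P^t(u,x)-\pi(x)}\le d_{TV}\bigl(P^t(u,\cdot),\pi\bigr)$, because $d_{TV}$ equals the supremum of $\abs{\mu(A)-\nu(A)}$ over events $A$. Hence $1-\alpha_t\le\abs{\sX}\,d(t)$, where $d(t)=\max_u d_{TV}(P^t(u,\cdot),\pi)$, and it remains to bound $d(t)$ spectrally.

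The main obstacle is the spectral estimate on $d(t)$: since $P$ is not assumed reversible it cannot be diagonalized directly, which is precisely why the multiplicative reversiblization $P\hat P$ appears. I would work in $L^2(\pi)$, where the time reversal is the adjoint, $P^*=\hat P$. Writing $h^{(u)}_t(x)=P^t(u,x)/\pi(x)-1$ and using the reversal identity $\hat P^t(x,u)/\pi(u)=P^t(u,x)/\pi(x)$, one checks $h^{(u)}_t=\hat P^t\bar e_u$, where $\bar e_u=\delta_u/\pi(u)-\mathbf 1$ is mean-zero with $\norm{\bar e_u}_{2,\pi}^2=(1-\pi(u))/\pi(u)\le 1/\pi_{\min}$. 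A Cauchy--Schwarz step gives $2\,d_{TV}(P^t(u,\cdot),\pi)=\sum_x\abs{P^t(u,x)-\pi(x)}\le\norm{h^{(u)}_t}_{2,\pi}$. Restricting to mean-zero functions removes the top eigenvalue $1$ of $P\hat P=\hat P^*\hat P$, so that $\norm{\hat P}_{2\to 2}^2=\lambda$ on this subspace; submultiplicativity $\norm{\hat P^t}_{2\to2}\le\norm{\hat P}_{2\to2}^{\,t}$ then gives $\norm{h^{(u)}_t}_{2,\pi}\le\lambda^{t/2}\norm{\bar e_u}_{2,\pi}\le\lambda^{t/2}/\sqrt{\pi_{\min}}$, whence $d(t)\le\lambda^{t/2}/(2\sqrt{\pi_{\min}})$.

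Combining the three estimates gives $1-\alpha_t\le\dfrac{\abs{\sX}}{2\sqrt{\pi_{\min}}}\,\lambda^{t/2}$, and summing the geometric series,
\[
\sum_{t=0}^{N}(1-\alpha_t)\le\frac{\abs{\sX}}{2\sqrt{\pi_{\min}}}\sum_{t\ge 0}\lambda^{t/2}=\frac{\abs{\sX}}{2\sqrt{\pi_{\min}}\,(1-\sqrt{\lambda})},
\]
yields the claim. The points requiring care are that the eigenvalue $1$ is correctly excised by passing to mean-zero functions (so the relevant norm is governed by the second eigenvalue $\lambda$ of $P\hat P$), and that one applies submultiplicativity of the operator norm rather than trying to identify the eigenvalues of $\hat P^t$ directly, since $P^t\hat P^t\ne(P\hat P)^t$ when $P$ is non-reversible.
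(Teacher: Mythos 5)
Your proof is correct, and its skeleton coincides with the paper's: both start from the exact optimal distortion $\sum_{t=0}^{N}(1-\alpha_t)$ of Theorem~\ref{thm}, bound each term by $\abs{\sX}$ times the worst-case total-variation distance $\max_{u}\norm{P^t(u,\cdot)-\pi}_{TV}$, invoke a spectral decay estimate of the form $\norm{P^t(u,\cdot)-\pi}_{TV}\le \lambda^{t/2}/(2\sqrt{\pi(u)})$, and sum the geometric series. The one genuine difference is how that spectral estimate is obtained: the paper simply cites Theorem~2.1 of \cite{fill1991eigenvalue}, whereas you prove it from scratch by working in $L^2(\pi)$, identifying $\hat P$ as the adjoint of $P$, writing the density discrepancy as $h^{(u)}_t=\hat P^t\bar e_u$ with $\bar e_u=\delta_u/\pi(u)-\mathbf 1$, and combining Cauchy--Schwarz with the operator-norm bound $\norm{\hat P}_{2\to2}\le\sqrt{\lambda}$ on mean-zero functions (valid because $P\hat P=\hat P^*\hat P$ is self-adjoint, fixes $\mathbf 1$, and preserves the mean-zero subspace). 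This makes your argument self-contained and makes transparent exactly why the multiplicative reversiblization $P\hat P$, rather than $P$ itself, governs the decay for non-reversible chains; the paper's version is shorter but opaque at precisely this step. Your closing cautions are also well placed: submultiplicativity of the restricted operator norm is needed because $P^t\hat P^t\neq(P\hat P)^t$ in general, and passing to mean-zero functions is what excises the eigenvalue $1$. The only cosmetic difference is that the paper's intermediate bound keeps $\max_x\abs{\pi(x)-\min_{x_0}P^t(x_0,x)}$ and picks a minimizing state $z_x$ before passing to TV distance, while you pass to $\max_{x,u}\abs{P^t(u,x)-\pi(x)}$ directly; these are equivalent. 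Both arguments share the same implicit caveat that the bound is informative only when $\lambda<1$.
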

\begin{proof}
    See Appendix~\ref{proofofcoro}.
\end{proof}

\section{Proof of theorems}
\subsection{Proof of Theorem~\ref{thm:sst}}\label{SSTSec}  To prove the perfect privacy part of Theorem 1, we need Lemmas 1 and 2. Lemma 1 shows that it is sufficient to protect the first releasing value $X_\tau$ and the release time $\tau$, which follows from the strong Markov property and the definition of $\tau$ (proof in Appendix A). 
 
\begin{lem}\label{lem:equivalent}
For the SST-Mechanism defined in \eqref{sstscheme}, we have
 $$   I(X_0;Y_{[0:N]}) = 0  \Leftrightarrow I(X_0;\tau,X_\tau)=0    .$$ 
\end{lem}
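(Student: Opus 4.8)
The plan is to show that, as far as the initial state $X_0$ is concerned, observing the released sequence $Y_{[0:N]}$ is informationally equivalent to observing the pair $(\tau,X_\tau)$. The starting point is that the released sequence is a deterministic function of the triple $(\tau,X_\tau,W)$, where $W=X_{[\tau+1:N]}$ is the post-$\tau$ trajectory: by \eqref{sstscheme} we have $Y_{[0:N]}=(\underbrace{\ast,\ldots,\ast}_{\tau},X_\tau,X_{\tau+1},\ldots,X_N)$ on the event $\{\tau\le N\}$, and $Y_{[0:N]}=(\ast,\ldots,\ast)$ otherwise. The crux is therefore to argue that $W$ carries no additional information about $X_0$ once $(\tau,X_\tau)$ is revealed, and conversely that $(\tau,X_\tau)$ can be read off from $Y_{[0:N]}$.

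For the direction $I(X_0;\tau,X_\tau)=0\Rightarrow I(X_0;Y_{[0:N]})=0$, I would first record that $\tau$ is a \emph{randomized} stopping time: by \eqref{sst} the event $\{\tau=t\}$ is measurable with respect to $\mathcal{F}_t=\sigma(X_{[0:t]},U_{[1:t]})$, and since the auxiliary variables $\{U_t\}$ are independent of the chain, $\{X_t\}$ remains Markov with respect to $\{\mathcal{F}_t\}$. The strong Markov property then yields that, conditioned on $X_\tau$, the continuation $W$ is independent of $\mathcal{F}_\tau$, and in particular $I(X_0;W\mid\tau,X_\tau)=0$. Combining the data-processing inequality (since $Y_{[0:N]}$ is a function of $(\tau,X_\tau,W)$) with the chain rule gives
\begin{align*}
 I(X_0;Y_{[0:N]}) &\le I(X_0;\tau,X_\tau,W) \\
 &= I(X_0;\tau,X_\tau)+I(X_0;W\mid\tau,X_\tau) \\
 &= I(X_0;\tau,X_\tau),
\end{align*}
so the left-hand side vanishes whenever the right-hand side does. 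This is the direction actually needed for the achievability claim in Theorem~\ref{thm:sst}.

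For the converse I would recover $(\tau,X_\tau)$ from the observation: on $\{\tau\le N\}$ the time $\tau$ is exactly the first index with $Y_t\ne\ast$ and $X_\tau=Y_\tau$, so $(\tau,X_\tau)$ is a deterministic function of $Y_{[0:N]}$ there, and data processing gives $I(X_0;\tau,X_\tau)\le I(X_0;Y_{[0:N]})$ on this event. I expect the main obstacle to be the degenerate tail event $\{\tau>N\}$, on which $Y_{[0:N]}$ is identically all-erasures and fails to distinguish among the possible values of $\tau$. I would handle this by observing that on $\{\tau>N\}$ the observation is constant and so contributes only the prior, meaning the equivalence is genuinely a statement about the \emph{revealed} portion of the stopping time; the gap is then closed by requiring \eqref{pp} to hold for every horizon $N$ (equivalently, using the a.s.\ finiteness of the SST), which forces $\P{\tau=t\mid X_0=x_0}$ to be independent of $x_0$ for each $t$ and hence the full $X_0\independent(\tau,X_\tau)$. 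Throughout I would also exploit the explicit law \eqref{joint}, which shows that $X_\tau\sim\pi$ is conditionally independent of $\tau$ given $X_0$; consequently $I(X_0;\tau,X_\tau)=I(X_0;\tau)$, so the entire leakage is carried by the stopping time alone, which makes the tail-event bookkeeping transparent and reduces the downstream lemmas to controlling the dependence of $\P{\tau=t\mid X_0=x_0}$ on $x_0$.
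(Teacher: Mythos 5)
Your proposal is correct, and its core is the same as the paper's: both arguments hinge on viewing $\tau$ as a stopping time for the filtration enlarged by the auxiliary uniforms $\{U_t\}$, applying the strong Markov property at $\tau$ to conclude $I(X_0;X_{[\tau+1:N]}\mid \tau,X_\tau)=0$, and noting that $Y_{[0:N]}$ is determined by $(\tau,X_\tau,X_{[\tau+1:N]})$ while, on $\{\tau\le N\}$, the pair $(\tau,X_\tau)$ can be read off $Y_{[0:N]}$. The differences are in packaging: the paper proves the exact identity $I(X_0;Y_{[0:N]})=I(X_0;\tau,X_\tau)+I(X_0;X_{[\tau+1:N]}\mid\tau,X_\tau)$ via the chain rule, which yields both implications at once, whereas you prove ``$\Leftarrow$'' by data processing and ``$\Rightarrow$'' by explicit recovery of $(\tau,X_\tau)$; your extra reduction $I(X_0;\tau,X_\tau)=I(X_0;\tau)$ from \eqref{joint} is also valid. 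Where you are actually more careful than the paper is the tail event $\{\tau>N\}$: the paper's step (a) asserts a one-to-one mapping between the erasure prefix and $\tau$, which holds only when $\tau\le N$, so its converse direction silently assumes the stopping time is revealed within the horizon, an issue you flag explicitly. One caveat on your fix: the parenthetical ``equivalently, using the a.s.\ finiteness of the SST'' is not an equivalent substitute---for fixed $N$ with $\P{\tau>N}>0$, almost-sure finiteness does not prevent the conditional law of $\tau$ restricted to $\{\tau>N\}$ from depending on $X_0$, so closing the converse genuinely requires imposing \eqref{pp} at every horizon (equivalently, letting $N\to\infty$), which is a mild strengthening of the lemma's fixed-$N$ statement.
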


\begin{lem}\label{lem:sstprivacy}
When $P$ is transitively invariant, the SST-Mechanism in Theorem~\ref{thm:sst} guarantees $I(X_0;\tau,X_\tau)=0$.
\end{lem}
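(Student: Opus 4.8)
The plan is to prove the stronger statement that the pair $(\tau,X_\tau)$ is statistically independent of $X_0$, which is equivalent to $I(X_0;\tau,X_\tau)=0$. Concretely, I would verify that the conditional joint law $\mathbb{P}(\tau=t,\,X_\tau=y\mid X_0=x_0)$ does not depend on the initial state $x_0$. By the explicit formula \eqref{joint} for the optimal SST, this joint probability equals $\pi(y)\,(a_t^{x_0}-a_{t-1}^{x_0})$. Since the factor $\pi(y)$ carries no dependence on $x_0$, the whole task reduces to proving that $a_t^{x_0}=\min_{x\in\sX}P^t(x_0,x)/\pi(x)$ is independent of $x_0$ for every $t$; this is precisely the sufficient condition flagged in the discussion preceding the definition of transitive invariance.

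Two structural facts about a transitively invariant chain would drive the argument. First, the $t$-step matrix inherits the invariance, i.e. $P^t(gx,gy)=P^t(x,y)$ for all $g\in G$; I would establish this by a short induction on $t$, reindexing the intermediate summation variable $z\mapsto gz$ and invoking $P(gx,gy)=P(x,y)$ at each step. Second, the stationary distribution is $G$-invariant, $\pi(gx)=\pi(x)$. For this I would check that $x\mapsto\pi(gx)$ is itself stationary: expanding $\sum_x\pi(gx)P(x,y)$, substituting $x\mapsto g^{-1}x$, and applying $P(g^{-1}x,y)=P(x,gy)$ collapses the sum to $\sum_x\pi(x)P(x,gy)=\pi(gy)$, so uniqueness of the stationary distribution for a finite irreducible chain forces $\pi(g\cdot)=\pi(\cdot)$. (Transitivity additionally makes $\pi$ uniform, though only invariance is needed below.)

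With these in hand the conclusion is a relabeling argument. Given any two initial states $x_0$ and $x_0'$, transitivity supplies $g\in G$ with $x_0'=gx_0$, and the computation would read
\begin{align*}
 a_t^{x_0'}=\min_{x\in\sX}\frac{P^t(gx_0,x)}{\pi(x)}
 =\min_{x\in\sX}\frac{P^t(gx_0,gx)}{\pi(gx)}
 =\min_{x\in\sX}\frac{P^t(x_0,x)}{\pi(x)}=a_t^{x_0},
\end{align*}
where the second equality reindexes the minimization by the bijection $x\mapsto gx$ and the third uses the two invariances just established. Hence $a_t^{x_0}$, and therefore $\mathbb{P}(\tau=t\mid X_0=x_0)=a_t^{x_0}-a_{t-1}^{x_0}$ via \eqref{tau}, is the same for every initial state. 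Combined with the SST property, which already guarantees that $X_\tau\sim\pi$ independently of $\tau$, this shows the joint law of $(\tau,X_\tau)$ is free of $x_0$, giving $I(X_0;\tau,X_\tau)=0$.

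I expect the only genuinely delicate point to be the $G$-invariance of $\pi$: the reindexing that pushes the group action through the transition kernel must be carried out with the correct placement of $g$ and $g^{-1}$, and it is this invariance (not merely the invariance of $P^t$) that legitimizes pulling the group action through the ratio inside the minimum. Everything else is routine bookkeeping with bijective reindexing of finite sums and minima.
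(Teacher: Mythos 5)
Your proposal is correct and follows essentially the same route as the paper: both reduce the claim, via \eqref{joint}, to showing that $a_t^{x_0}$ is independent of $x_0$, and both establish the $t$-step invariance $P^t(gx,gy)=P^t(x,y)$ by the same induction-and-reindexing argument (the paper's Proposition~\ref{Prop::TransitiveInvariant}). The only immaterial difference is how the symmetry of $\pi$ is obtained: the paper shows $P$ is doubly stochastic so $\pi$ is uniform, whereas you derive $G$-invariance of $\pi$ from uniqueness of the stationary distribution, and under transitivity these are equivalent.
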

\begin{proof}[Proof of Lemma~\ref{lem:sstprivacy}]
    It suffices to show 
    \begin{align}
        \P{X_\tau=x,\tau=t\mid X_0=x_0}=\P{X_\tau=x,\tau=t}.
    \end{align}
    We show in Appendix~\ref{appendix:SSTSec} that transitively invariant chains satisfy the following two properties due to their symmetries: (i). $\min_xP^t(x_0,x)$ doesn't depend on $x_0$ for any $t$; (ii). $\pi$ is uniform over $\sX$.
    From equation~\eqref{joint}, we have
    \begin{align}
        &\mathbb{P}(\tau = t, X_\tau = x\mid X_0=x_0)\notag\\
        \overset{(a)}{=}&\pi(x){|\sX|}\left(\min_xP^t(x_0,x)-\min_xP^{t-1}(x_0,x)\right)\label{dist}\\
        \overset{(b)}{=}&\P{X_\tau=x,\tau=t},\notag
    \end{align}
    where (a) is due to the definition of $a_t^{x_0}$ and $\pi$ is uniform, and (b) is because $\min_xP^t(x_0,x)$ is independent of $x_0$.
\end{proof}
The utility in \eqref{sstdistortion} is proved in Lemma 3.
\begin{lem}
    For the SST-Mechanism defined in \eqref{sstscheme}, we have
    \begin{align*}
          \E{d_H(X_{[0:N]},Y_{[0:N]})}=&N-\sum_{t= 0}^N|\sX| \min _{x \in \sX} P^t(x_0, x).
    \end{align*}
\end{lem}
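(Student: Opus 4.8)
The plan is to express the distortion directly in terms of the stopping time $\tau$ and then evaluate its tail using the construction \eqref{sst}. Since the SST-Mechanism \eqref{sstscheme} sets $Y_t = \ast$ exactly when $t < \tau$, the number of erased coordinates among $t = 0,1,\ldots,N$ equals the number of indices strictly below $\tau$, so that $d_H(X_{[0:N]},Y_{[0:N]}) = \sum_{t=0}^N \mathds{1}[t<\tau]$. Taking expectations and exchanging sum and expectation yields the tail-sum expression $\E{d_H(X_{[0:N]},Y_{[0:N]})} = \sum_{t=0}^N \P{\tau > t}$, reducing the problem to computing the complementary CDF of $\tau$.

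Next I would compute $\P{\tau > t \mid X_0 = x_0}$ from \eqref{tau}. Since $\P{\tau = s \mid X_0 = x_0} = a_s^{x_0} - a_{s-1}^{x_0}$, the sum telescopes to $\P{\tau \le t \mid X_0 = x_0} = \sum_{s=1}^t (a_s^{x_0} - a_{s-1}^{x_0}) = a_t^{x_0} - a_0^{x_0}$. A short boundary computation gives $a_0^{x_0} = 0$: because $P^0 = I$, the ratio $P^0(x_0,x)/\pi(x)$ vanishes for every $x \neq x_0$, so the minimum over $x \in \sX$ is $0$ whenever $|\sX| \ge 2$. Hence $\P{\tau \le t \mid X_0 = x_0} = a_t^{x_0}$ and $\P{\tau > t \mid X_0 = x_0} = 1 - a_t^{x_0}$.

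Finally I would invoke transitive invariance to rewrite $a_t^{x_0}$. As established in the proof of Lemma~\ref{lem:sstprivacy}, the stationary distribution $\pi$ of a transitively invariant chain is uniform, so $\pi(x) = 1/|\sX|$ and therefore $a_t^{x_0} = \min_{x \in \sX} P^t(x_0,x)/\pi(x) = |\sX|\,\min_{x \in \sX} P^t(x_0,x)$. Substituting into the tail sum gives $\E{d_H(X_{[0:N]},Y_{[0:N]})} = \sum_{t=0}^N \left(1 - |\sX|\,\min_{x \in \sX} P^t(x_0,x)\right)$, as claimed; the same argument shows $\min_{x} P^t(x_0,x)$ is independent of $x_0$, confirming the distortion does not depend on the initial state. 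This argument is essentially a bookkeeping computation, so I do not expect a substantial obstacle; the only points requiring care are the boundary evaluation $a_0^{x_0} = 0$ and the use of a uniform $\pi$, which is precisely where the transitive-invariance hypothesis enters.
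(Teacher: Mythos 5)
Your proposal is correct and follows essentially the same route as the paper's proof: both reduce the distortion to the truncated tail sum $\sum_{t=0}^{N}\P{\tau>t}$ and then evaluate $\P{\tau\le t\mid X_0=x_0}=a_t^{x_0}=|\sX|\min_{x\in\sX}P^t(x_0,x)$ via \eqref{tau} and the uniformity of $\pi$ for transitively invariant chains. The only difference is that you make explicit two details the paper leaves implicit, namely the boundary value $a_0^{x_0}=0$ and the indicator decomposition $d_H=\sum_{t=0}^{N}\mathds{1}[t<\tau]$ justifying the tail-sum identity.
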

\begin{proof}
 Note that $\E{d_H(X_{[0:N]},Y_{[0:N]})}=\sum_{t=0}^N\P{Y_t=\ast}$.
From \eqref{tau} and \eqref{dist} and the perfect privacy, we have,
  \begin{align*}
       \P{Y_t=\ast} = 1 -\P{\tau\leq t}=1-|\sX|\min_x P^t(x_0,x).
    \end{align*}
    And the result follows from summing over $t$.
    It is worth noting that this distortion meets the optimal bound given in \cite{ye2022mechanisms}. 
\end{proof}

\subsection{Proof of Theorem~\ref{thm:window}}
Towards an understanding of the sequential redaction mechanism, we first note the following properties.
\begin{enumerate}[label=(\arabic*)]
    \item If $t = 0$, then the numerator in  \eqref{eq:eraseprob} becomes
    \begin{equation*}
        \min_{u \in \mathcal{X}} \P{X_0 = x_{0}|X_{0}=u} = 0,
    \end{equation*}
    by choosing the hypothetical value $u \neq x_0$, 
which yields
\begin{equation*}
\P{Y_t = x_t \mid X_{0}=x_{0}, X_t = x_t, Y_{[0:t-1]}=y_{[0:t-1]} }
    = 0.
\end{equation*}
This implies that $X_{t}$ is always redacted. Intuitively, the private data point $X_0$ must be masked before release. 
\item By inspecting the privacy mechanism in \eqref{eq:qdef}, we know that if $Y_{t-1} \neq \ast$ for some $t > 1$, then 
\begin{align*}
& \P{X_t = x_t|X_0 = x_0, Y_{[0:t-1]} = y_{[0:t-1]}} \\
 = &\P{X_t = x_t|X_0 = x_0, Y_{[0:t-1]} = y_{[0:t-1]},X_{t-1}=y_{t-1}}  \\
 = &\P{X_t = x_t|X_{t-1}=y_{t-1}}, 
\end{align*}
for any $x_t$ and $y_{[0:t-1]}$ by the Markov property. Note that $\P{X_t = x_t|X_{t-1}=y_{t-1}}$ does not depend on the hypothetical value $u$ of $X_0$. This implies that
\begin{align*}
& \P{Y_t = x_t \mid X_{0}=x_{0}, X_t = x_t, Y_{[0:t-1]}=y_{[0:t-1]} } \\
 = &\frac{\min_{u \in \mathcal{X}} \P{X_t = x_{t}|X_{0}=u,Y_{[0:t-1]}=y_{[0:t-1]}}}{\P{X_t = x_{t}|X_{0}=x_0,Y_{[0:t-1]}=y_{[0:t-1]}}} \\
 = &\frac{\P{X_t = x_t|X_{t-1}=y_{t-1}}}{\P{X_t = x_t|X_{t-1}=y_{t-1}}}  \\
 = &1,
\end{align*}
which means that if $Y_{t-1} \neq \ast$ then $Y_{t} \neq \ast$ with probability one.
\end{enumerate}

Given these two observations, we can conclude that the privacy mechanism redacts all data points within the window from the initial position and releases those outside the window.  
Therefore, the privacy mechanism can be viewed as determining the length of the redaction window, which is chosen stochastically. 

As discussed previously, we know that if $Y_{t-1} \neq \ast$, then $Y_{t} \neq \ast$, and thus the value of interest is the conditional probability $\P{Y_{t}= \ast \mid Y_{t-1}= \ast}$ induced by our privacy mechanism, which is given in the following lemma and the proof is deferred to Appendix~\ref{Appendix-A}.
\begin{lem}
\label{lem1}
The conditional probability $\P{Y_{t}= \ast \mid Y_{t-1}= \ast}$ induced by \eqref{eq:qdef} is given by
\begin{equation}\label{eq:lem}
    \P{Y_{t}= \ast \mid Y_{t-1}= \ast} = \frac{1-\alpha_t}{1-\alpha_{t-1}}.
\end{equation}
\end{lem}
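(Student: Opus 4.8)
The plan is to evaluate the one-step continuation probability directly from the mechanism \eqref{eq:qdef}, exploiting the window structure already established in observations (1)--(2). Since an erasure at time $t-1$ forces every earlier symbol to be an erasure, the event $\{Y_{t-1}=\ast\}$ coincides with the all-erasure prefix $\{Y_{[0:t-1]}=(\ast,\dots,\ast)\}$, and I would condition on the latter throughout. Writing $p_t(x\mid u):=\P{X_t=x\mid X_0=u,\,Y_{[0:t-1]}=(\ast,\dots,\ast)}$ and $m_t(x):=\min_{u}p_t(x\mid u)$, the definition of $q_t$ reads $q_t=m_t(X_t)/p_t(X_t\mid X_0)$. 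Averaging the release probability $q_t$ over $(X_0,X_t)$ under the all-erasure conditioning, the trajectory weight contributes a factor $p_t(X_t\mid X_0)$ that cancels the denominator exactly, so the $X_0$-dependence collapses and one is left with $\P{Y_t\neq\ast\mid Y_{[0:t-1]}=(\ast,\dots,\ast)}=\sum_x m_t(x)$. Hence the target equals $1-\sum_x m_t(x)$, and it remains only to identify $\sum_x m_t(x)$ in terms of the $\alpha_t$.

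Next I would track the unnormalized sub-probabilities $\phi_t(x\mid u):=\P{X_t=x,\,Y_{[0:t-1]}=(\ast,\dots,\ast)\mid X_0=u}$, together with the normalizer $Z_t(u):=\sum_x\phi_t(x\mid u)$ and the column minimum $\psi_t(x):=\min_u\phi_t(x\mid u)$, so that $m_t(x)=\psi_t(x)/Z_t$ once $Z_t$ is seen to be $u$-free. Conditioning on the state $X_{t-1}$, applying the mechanism's erasure weight $1-q_{t-1}$, and then the Markov transition $P(x_{t-1},x)$ yields the recursion $\phi_t(x\mid u)=\sum_{x_{t-1}}\bigl(\phi_{t-1}(x_{t-1}\mid u)-\psi_{t-1}(x_{t-1})\bigr)P(x_{t-1},x)$; the decisive simplification is that multiplying $\phi_{t-1}(\cdot\mid u)$ by $1-q_{t-1}$ subtracts precisely the column minimum $\psi_{t-1}$.

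I would then solve this recursion by induction, proving the invariant $\phi_t(x\mid u)=P^t(u,x)-c_t(x)$ for a function $c_t$ independent of $u$, with base case $\phi_0(x\mid u)=\mathds{1}[x=u]$. The key point is that the $u$-free part cancels inside the difference, $\phi_{t-1}(\cdot\mid u)-\psi_{t-1}(\cdot)=P^{t-1}(u,\cdot)-\min_{u'}P^{t-1}(u',\cdot)$, so the recursion closes in powers of $P$ and their column minima, giving $c_t(x)=\sum_{x_{t-1}}\min_{u'}P^{t-1}(u',x_{t-1})P(x_{t-1},x)$. Summing the invariant over $x$ shows $Z_t(u)=1-\alpha_{t-1}$ is indeed $u$-free (which justifies the factorization $m_t=\psi_t/Z_t$ and is consistent with the perfect privacy of Theorem~\ref{thm}), while $\sum_x\psi_t(x)=\sum_x\min_uP^t(u,x)-\sum_xc_t(x)=\alpha_t-\alpha_{t-1}$ after the column sums telescope through $\sum_xP(x_{t-1},x)=1$. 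Combining, $\sum_x m_t(x)=(\alpha_t-\alpha_{t-1})/(1-\alpha_{t-1})$, whence $\P{Y_t=\ast\mid Y_{t-1}=\ast}=1-(\alpha_t-\alpha_{t-1})/(1-\alpha_{t-1})=(1-\alpha_t)/(1-\alpha_{t-1})$.

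The main obstacle is the middle step: isolating the correct bookkeeping variable $\phi_t$, showing that the erasure weight turns the update into a column-minimum subtraction, and then guessing and verifying the invariant $\phi_t(x\mid u)=P^t(u,x)-c_t(x)$ whose $u$-free correction cancels in the recursion. Once that invariant is in hand, identifying $Z_t$ with $1-\alpha_{t-1}$ and $\sum_x\psi_t$ with $\alpha_t-\alpha_{t-1}$ is routine telescoping; it is the $u$-independence of the normalizer (equivalently, that each prefix of $Y$ is independent of $X_0$) that makes the final ratio collapse to the stated form.
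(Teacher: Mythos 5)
Your proposal is correct, and it shares the same skeleton as the paper's proof: reduce $\{Y_{t-1}=\ast\}$ to the all-erasure prefix, show the release probability given that prefix equals $\sum_x \min_u \P{X_t=x \mid X_0=u,\, Y_{[0:t-1]}=(\ast,\dots,\ast)}$, and pin that sum down by an induction whose heart is the cancellation $\phi_{t-1}(\cdot\mid u)-\psi_{t-1}(\cdot)=P^{t-1}(u,\cdot)-\min_{u'}P^{t-1}(u',\cdot)$. The difference is the bookkeeping, and yours buys a real simplification. The paper works with the normalized conditionals $p(x_t\mid x_0,y_{[0:t-1]})$, so its invariant carries an explicit factor $\prod_{k=1}^{t-1}\frac{1}{1-s_k}$, and after minimizing and summing it still needs a second induction to telescope $s_t=\bigl(\prod_{k=1}^{t-1}(1-s_k)^{-1}\bigr)(\alpha_t-\alpha_{t-1})$ into $s_t=(\alpha_t-\alpha_{t-1})/(1-\alpha_{t-1})$. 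Your unnormalized sub-probabilities $\phi_t(x\mid u)=\P{X_t=x,\,Y_{[0:t-1]}=(\ast,\dots,\ast)\mid X_0=u}$ absorb that product: the invariant $\phi_t(x\mid u)=P^t(u,x)-c_t(x)$ is cleaner, and both $Z_t=1-\alpha_{t-1}$ and $\sum_x\psi_t(x)=\alpha_t-\alpha_{t-1}$ fall out by summation, so the second induction disappears. One point you should make explicit when writing this up: the recursion step, where multiplying by $1-q_{t-1}$ "subtracts precisely the column minimum $\psi_{t-1}$," is valid only because $Z_{t-1}(u)$ is constant in $u$; since $q_{t-1}$ is defined through normalized conditionals, one gets $\phi_{t-1}(x\mid u)\,q_{t-1}=Z_{t-1}(u)\,\min_{u'}\bigl(\phi_{t-1}(x\mid u')/Z_{t-1}(u')\bigr)$, which equals $\psi_{t-1}(x)$ only when the normalizers agree. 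You derive that $u$-freeness from the invariant, but the invariant is proved via the recursion, so to avoid circularity the induction must be organized jointly: the hypothesis at step $t-1$ includes the invariant (hence $Z_{t-1}$ constant), the recursion from $t-1$ to $t$ is derived inside the inductive step, and the invariant at $t$ follows. You gesture at exactly this in your closing paragraph, so this is an organizational fix, not a gap. (The paper sidesteps the issue because its Bayes-rule update of normalized conditionals is self-normalizing, making the column-minimum subtraction immediate from the definition of the mechanism.)
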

From \eqref{eq:lem}, we can easily obtain that the probability distribution of the length of the redaction window is given by
\begin{equation*}
    \P{\tau = t} =  \P{Y_{t} \neq \ast, Y_{t-1} = \ast } = \alpha_{t} - \alpha_{t-1},
\end{equation*}
which finishes the proof.

\section{Discussion and conclusion}
For correlated data modeled by a finite time-homogeneous Markov chain, we characterize the optimal amount of data disclosed when protecting the initial state.  

A connection between privacy and strong stationary times is initially established. It turns out that, surprisingly, a sequential mechanism can be viewed as a relaxation of the SST-based mechanism. The difference between privacy and SST is that privacy merely requires its disclosure samples independent of private data, instead of a particular distribution, a stationary distribution in SST, which is a future direction for further exploration. 

\clearpage

\bibliographystyle{IEEEtran}
\bibliography{bib}
\appendices 
\onecolumn

\newpage

\section{Proof of Results in Section~\ref{SSTSec}}\label{appendix:SSTSec}
\begin{proof}[Proof of Lemma~\ref{lem:equivalent}]
    First, by the chain rule, we have
    \begin{align}
        I(X_0;Y_{[1:N]})&=I(X_0;Y_1,...,Y_\tau,Y_{\tau+1},...,Y_N)\\
        &\overset{(a)}{=}I(X_0;\tau, Y_\tau, Y_{\tau+1}, ..., Y_N)\\
        &\overset{(b)}{=}I(X_0;\tau, X_\tau, X_{\tau+1}, ..., X_N)\\
        &\overset{(c)}{=}I(X_0;\tau, X_\tau) + I(X_0; X_{\tau+1}, ..., X_N\mid \tau, X_\tau),
    \end{align}
    where (a) is due to $Y_0^{\tau-1}$ is a sequence of erasures with length $\tau$, and there is a one-to-one mapping between $Y_0^{\tau-1}$ and $\tau$, 
thus we have $H(Y_0^{\tau-1})=H(\tau)$ and $I(X_0;Y_0^{\tau-1})=I(X_0;\tau)$, (b) is due to $(Y_\tau,Y_{\tau+1},...,Y_N)=(X_\tau,X_{\tau+1},...,X_N)$, and (c) is the chain rule.
    Then it suffices to show $I(X_0; X_{\tau+1}, ..., X_N\mid \tau, X_\tau)=0$.

     In the construction of the release time $\tau$, the mechanism may employ an auxiliary sequence
of random variables $(U_t)_{t \ge 1}$, where $U_t \sim \mathrm{Unif}(0,1)$
are i.i.d. and independent of the entire chain $(X_t)_{t \ge 0}$.
Consider the \emph{extended filtration}
\begin{align}
    \mathcal G_t := \sigma\bigl(X_0, X_1, \ldots, X_t,\; U_1, U_2, \ldots, U_t\bigr),
\qquad t \ge 0.\label{extendfilter}
\end{align}
By construction, the stopping time $\tau$ satisfies
\[
\{\tau = t\} \in \mathcal G_t, \qquad \forall\, t \ge 0,
\]
and hence $\tau$ is a $\{\mathcal G_t\}$-stopping time.

Since the auxiliary randomness $(U_t)$ is independent of the chain
$(X_t)$ and does not affect its transition probability, the process
$(X_t)_{t \ge 0}$ remains a Markov chain with respect to the filtration
$\{\mathcal G_t\}$. In particular, the strong Markov property holds for
$(X_t)$ at the stopping time $\tau$ with respect to $\{\mathcal G_t\}$.
Specifically, for any $t \ge 0$, any $x \in \mathcal X$, and any $x_{t+1:n} \in \mathcal X^{n-t}$, we have
\begin{equation}
\mathbb P\!\left(
X_{t+1:n}=x_{t+1:n} \,\middle|\, \tau = t,\; X_\tau = x,\; \mathcal G_t
\right)
=
\mathbb P\!\left(
X_{t+1:n} =x_{t+1:n} \,\middle|\, \tau = t,\; X_\tau = x
\right).
\label{eq:strong_markov_extended}
\end{equation}
The right-hand side depends only on the current state $x$ and the time
horizon $n-t$, and is independent of both the initial state $X_0$ and the
auxiliary variables $(U_s)$.


\[
X_{\tau+1:n} \independent X_0 \mid (X_\tau=x, \tau=t).
\]
Since this factorization holds for all $(x,t)$, the conditional mutual
information satisfies
\[
I\!\left( X_{\tau+1}^{\,n} \,;\, X_0 \,\middle|\, Y_\tau, \tau \right) = 0 .
\]
\end{proof}

\begin{prop}
\label{Prop::TransitiveInvariant}
For a transitively invariant Markov chain, the following hold.

\begin{enumerate}
\item For every $t\ge 1$, the quantity
\[
\min_{y\in\sX} P^t(x,y)
\]
is independent of the initial state $x\in\sX$.

\item The transition matrix $P$ is doubly stochastic, and hence the uniform
distribution $\pi(y)=1/|\sX|$ is a stationary distribution of the chain.
\end{enumerate}
\end{prop}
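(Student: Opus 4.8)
The plan is to reduce both parts to a single structural fact: the one-step invariance $P(gx,gy)=P(x,y)$ propagates to all powers, i.e. $P^t(gx,gy)=P^t(x,y)$ for every $g\in G$ and every $t\ge 1$. First I would establish this directly from the path expansion $P^t(x,y)=\sum_{z_1,\dots,z_{t-1}}P(x,z_1)P(z_1,z_2)\cdots P(z_{t-1},y)$. Since each $g\in G$ acts as a permutation of $\sX$, substituting $z_i=gw_i$ reindexes the sum bijectively, and applying the one-step invariance to every factor gives $P^t(gx,gy)=\sum_{w_1,\dots,w_{t-1}}P(x,w_1)\cdots P(w_{t-1},y)=P^t(x,y)$.

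For part~(1), I would fix two initial states $x,x'\in\sX$. By transitivity there is $g\in G$ with $x'=gx$. Because $w\mapsto gw$ is a bijection of $\sX$, I would write $\min_{y\in\sX}P^t(x',y)=\min_{w\in\sX}P^t(gx,gw)$ and then invoke the power invariance just established to obtain $\min_{w\in\sX}P^t(gx,gw)=\min_{w\in\sX}P^t(x,w)$. Hence $\min_{y\in\sX}P^t(x,y)$ takes the same value at $x$ and $x'$, and since the pair was arbitrary, it is independent of the initial state.

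For part~(2), I would first show that all column sums of $P$ coincide. Fixing $y,y'\in\sX$ with $y'=gy$ and reindexing $x=gw$, the same permutation-plus-invariance argument yields $\sum_{x\in\sX}P(x,y')=\sum_{w\in\sX}P(gw,gy)=\sum_{w\in\sX}P(w,y)=\sum_{x\in\sX}P(x,y)$, so every column sum equals a common constant $c$. Summing all entries of $P$ in two orders gives $|\sX|\,c=\sum_{x\in\sX}\sum_{y\in\sX}P(x,y)=\sum_{x\in\sX}1=|\sX|$, whence $c=1$ and $P$ is doubly stochastic. Finally, the uniform law $\pi(y)=1/|\sX|$ satisfies $\sum_{x\in\sX}\pi(x)P(x,y)=\tfrac{1}{|\sX|}\sum_{x\in\sX}P(x,y)=\tfrac{1}{|\sX|}=\pi(y)$, so it is stationary.

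All steps are elementary once the power invariance is in hand. The only point requiring care—and the one I would flag as the crux rather than a genuine obstacle—is the bookkeeping of the reindexing bijections: transitivity of the action is what lets me move between arbitrary pairs of states, while the invariance is what lets me cancel the group element after reindexing, and both parts collapse once these two ingredients are combined correctly.
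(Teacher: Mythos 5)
Your proposal is correct and follows essentially the same route as the paper's proof: the paper establishes $P^t(gx,gy)=P^t(x,y)$ by induction on $t$ (your path-expansion argument is just an unrolled version of the same reindexing), and then uses the identical bijection-plus-transitivity arguments for both the minimum in part (1) and the constant column sums in part (2).
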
    
\begin{proof}[Proof of Proposition~\ref{Prop::TransitiveInvariant}]
We first show that for any $g\in G$, $gx$ is a bijection on $\sX$ to itself. Since $G$ is a group, there exists $e\in G$ such that $ex=x$ for any $x\in\sX$, and for any $g\in G$, there exists $g^{-1}\in G$ such that $g^{-1}g=gg^{-1}=e$. Thus, $g^{-1}gx=gg^{-1}x=x$, i.e., the inverse mapping exists, thus $g$ is a bijection.
\begin{enumerate}
    \item     We prove that $P^t$ holds $P^{t}(gx,gy)=P^t(x,y)$
by induction on $t$.
The case $t=1$ is the definition. Assume it holds for some $t\ge 1$.
Then for any $x,y$,
\[
P^{t+1}(gx,gy)=\sum_{z\in\Omega} P^{t}(gx,z)\,P(z,gy).
\]
Use the bijection $z=gz'$ (i.e., $z'=g^{-1}z$) to rewrite the sum as
\[
\sum_{z'\in\Omega} P^{t}(gx,gz')\,P(gz',gy).
\]
By the induction hypothesis and the invariance of $P$,
$P^{t}(gx,gz')=P^{t}(x,z')$ and $P(gz',gy)=P(z',y)$, hence
\[
P^{t+1}(gx,gy)=\sum_{z'\in\Omega} P^{t}(gx,gz')\,P(gz',gy)=\sum_{z'\in\Omega} P^{t}(x,z')\,P(z',y)=P^{t+1}(x,y).
\]
Now,
\[
\min_{y\in\sX} P^t(gx,y)
=
\min_{y\in\sX} P^t(gx,gy)
=\min_{y\in\sX} P^t(x,y),
\]
since $gx$ is a bijection from $\sX$ to $\sX$ for any $g\in G$ and the invariance $P^t(gx,gy)=P^t(x,y)$.
\item For any $g$ and $y$,
\[
\sum_x P(x,gy)=\sum_x P(gx,gy)=\sum_x P(x,y),
\]
where the change of variables uses that $x\mapsto gx$ is a bijection.
By transitivity, $\sum_x P(x,y)$ is constant in $y$; since each row of $P$
sums to $1$, this constant must be $1$ by double counting.
Hence, $P$ is doubly stochastic and the uniform distribution
$\pi(y)=1/|\sX|$ is stationary.
\end{enumerate}
\end{proof}

\section{Proof of Theorem~\ref{thm:upper}}\label{proofofcoro}

\begin{proof}
We have
    \begin{align}
        1-\sum_x\min_{x_0} P^t(x_0,x)
        &=\sum_x\left(\pi(x)-\min_{x_0}P^t(x_0,x)\right)\notag\\
        &\leq \sum_x\left|\pi(x)-\min_{x_0}P^t(x_0,x)\right|\notag\\
        &\leq |\sX|\max_x\left|\pi(x)-\min_{x_0}P^t(x_0,x)\right|.\label{corostep1}
    \end{align}
    For $x\in\sX$, since $\sX$ is finite, denote $z_x=\operatorname{argmin}_{x_0} P^t(x_0,x)$, then we have
    \begin{align*}
        \left|\pi(x)-\min_{x_0} P^t(x_0,x)\right|=\left|\pi(x)- P^t(z_x,x)\right|\leq\norm{\pi-P^t(z_x,\cdot)}_{TV},
    \end{align*}
    thus 
    \begin{align}
        \eqref{corostep1}&\leq |\sX|\max_x\norm{\pi-P^t(z_x,\cdot)}_{TV}\leq|\sX|\max_z\norm{\pi-P^t(z,\cdot)}_{TV},\label{corostep2}
    \end{align}
    the second inequality is because
    \begin{align*}
        \{z_x \mid z_x=\operatorname{argmin}_{x_0} P^t(x_0,x),x\in\sX\}\subset\sX.
    \end{align*}
  Since by assumption the chain is ergodic, by Theorem~2.1 in \cite{fill1991eigenvalue}, we have for any $v\in\sX$:
    \begin{align}
        \norm{\pi-P^t(x,\cdot)}_{TV}\leq\frac{\left(\sqrt{\lambda}\right)^t}{2\sqrt{\pi(x)}},\label{corostep3}
    \end{align}
    where $\lambda$ the second largest eigenvalue of matrix $P\hat{P}$, and $\hat P$ be the time reversal of $P$, i.e., $\hat{P}(x,y)=\frac{\pi(y)P(y,x)}{\pi(x)}$. 
    
    Using \eqref{corostep3}, we have by the definition of total variation distance
    \begin{align}
        1-\sum_x\min_{x_0} P^t(x_0,x)\leq |\sX|\max_z \frac{\left(\sqrt{\lambda}\right)^t}{2\sqrt{\pi(z)}} = \frac{|\sX|}{2\sqrt{\pi_{\min}}}\cdot\left(\sqrt{\lambda}\right)^t.
    \end{align}
    Thus we have 
    \begin{align*}
        \E{d_H(X_{[0:N]}, Y_{[0:N]})}&=\sum_{t=0}^N(1-\sum_x\min_{x_0} P^t(x_0,x))\\
        &\leq\sum_{t=0}^N\frac{|\sX|}{2\sqrt{\pi_{\min}}}\cdot\left(\sqrt{\lambda}\right)^t\\
        &\leq\sum_{t=0}^\infty\frac{|\sX|}{2\sqrt{\pi_{\min}}}\cdot\left(\sqrt{\lambda}\right)^t\\
        &=\frac{|\sX|}{2\sqrt{\pi_{\min}}(1-\sqrt{\lambda})}.
    \end{align*}
\end{proof}

 \section{Proof of Lemma~\ref{lem1}}
 \label{Appendix-A}
For the ease of notations, we will slightly abuse the notation by writing $\P{X=x}$ as $\pb{x}$ when the underlying random variable is clear in the context. Also, $[0:i-1]$ will be written by $[i-1]$.

Recall that the target is to prove 
\[\P{Y_i = \ast|Y_{i-1} = \ast} = \frac{1 - \alpha_i}{1 - \alpha_{i-1}}.\]
Invoking the property that $Y_i = \ast$ only if $Y_{i-1} = \ast$, which yields  
\[\P{Y_i = \ast|Y_{i-1} = \ast} = \P{Y_i = \ast|Y_{i-1} = Y_{i-2} = \cdots = Y_0 = \ast}.\]
Clearly, we are only interested in the case of $Y_i = Y_{i-1} = Y_{i-2} = \cdots = Y_0 = \ast$, and thus without specified otherwise, $y_{[i]}$ are assumed to be $\ast$ in the sequel. 

As such, the target probability $\P{Y_i = \ast|Y_{i-1} = \ast}$ can be written as $\pb{y_i|y_{[i-1]}}$. Due to the privacy constraint, we further have 
\[\pb{y_i|y_{[i-1]}} = \pb{y_i|x_0, y_{[i-1]}}, \forall x_0 \in \mathcal{X}.\]

Therefore, we can obtain from \eqref{eq:qdef} that 
\begin{align*}
p\left(y_i \mid x_0, y_{[i-1]}\right) & =\sum_{x_i} p\left(x_i \mid x_0, y_{[i-1]}\right) p\left(y_i \mid x_i, x_0, y_{[i-1]}\right) \\
& =1-\sum_{x_i} \min _{u \in \mathcal{X}} p\left(x_i \mid X_0=u, y_{[i-1]}\right),
\end{align*}
for any $x_{0} \in \mathcal{X}$. Let $c_i(j):=\min _{u \in \mathcal{X}} p\left(X_i=j \mid X_0=u, y_{[i-1]}\right)$ and $s_i=\sum_{j \in \mathcal{X}} c_i(j)$. Note that the difference between $s_i$ and $\alpha_i$ defined previously is that $s_i$ involves $y_{[i-1]}$ after the conditioning. 

We claim that $s_i$ equals to 
\begin{equation}
\label{eq:appendix-eq1}
    s_i =\frac{\sum_{x_i} \min _{u \in \mathcal{X}} p\left(x_i \mid X_0=u\right)-\sum_{x_{i-1}} \min _{u \in \mathcal{X}} p\left(x_{i-1} \mid X_0=u\right)}{1-\sum_{x_{i-1}} \min _{u \in \mathcal{X}} p\left(x_{i-1} \mid X_0=u\right)} = \frac{\alpha_{i} - \alpha_{i-1}}{1 - \alpha_{i-1}}.
\end{equation}
It is easy to see that if \eqref{eq:appendix-eq1} is established, then we immediately have
\begin{align*}
    p\left(Y_i= \ast \mid Y_{i-1} = \ast \right) = 1 - s_i = \frac{1 - \alpha_{i}}{1 - \alpha_{i-1}},
\end{align*}
which is our desired result.


To establish \eqref{eq:appendix-eq1}, we first inspect the relation between $p\left(x_i \mid x_0, y_{[i-1]}\right)$ and $p\left(x_{i+1} \mid x_0, y_{[i]}\right)$ for any $x_0, x_i, x_{i+1} \in \mathcal{X}$ as follows.
\begin{align*}
p\left(x_{i+1} \mid x_0, y_{[i]}\right) 
    & = \sum_{x_i} p\left(x_{i+1} \mid x_i\right) p\left(x_i \mid x_0, y_{[i-1]}, y_i\right) \\
    & = \sum_{x_i} p\left(x_{i+1} \mid x_i\right) \frac{p\left(x_i \mid x_0, y_{[i-1]}\right) p\left(y_i \mid x_0, x_i, y_{[i-1]}\right)}{\sum_{x_i} p\left(x_i \mid x_0, y_{[i-1]}\right) p\left(y_i \mid x_0, x_i, y_{[i-1]}\right)} \\
    & \utag{a}{=} \sum_{x_i} p\left(x_{i+1} \mid x_i\right) \frac{p\left(x_i \mid x_0, y_{[i-1]}\right)-\min _u p\left(x_i \mid X_0=u, y_{[i-1]}\right)}{1-\sum_{x_i} \min _u p\left(x_i \mid X_0=u, y_{[i-1]}\right)} \\
    &  \utag{b}{=}  \frac{\sum_{x_i} p\left(x_{i+1} \mid x_i\right)\left(p\left(x_i \mid x_0, y_{[i-1]}\right)-\min _u p\left(x_i \mid X_0=u, y_{[i-1]}\right)\right)}{1-\sum_{x_i} \min _u p\left(x_i \mid X_0=u, y_{[i-1]}\right)} \\
    & =\frac{\sum_{x_i} p\left(x_{i+1} \mid x_i\right)
    \left(p\left(x_i \mid x_0, y_{[i-1]}\right)-\min _u p\left(x_i \mid X_0=u, y_{[i-1]}\right)\right)}{1-s_i},
\end{align*}
where \uref{a} follows by plugging in the privacy mechanism \eqref{eq:qdef}, and \uref{b} follows because $1-\sum_{x_i} \min _u p\left(x_i \mid X_0=u, y_{[i-1]}\right)$ is a constant (does not depend on a particular value of $x_i$).

Given the above relation between $p\left(x_i \mid x_0, y_{[i-1]}\right)$ and $p\left(x_{i+1} \mid x_0, y_{[i]}\right)$, we claim that for any $x_0, x_i \in \mathcal{X}$,
\begin{align}
\label{eq:appendix-recur}
p\left(x_i \mid x_0, y_{[i-1]}\right) = \left(\prod_{k=1}^{i-1} \frac{1}{1-s_k}\right) \left(p\left(x_i \mid x_0\right)-\sum_{x_{i-1}} p\left(x_i \mid x_{i-1}\right) \min_{x_0} p\left(x_{i-1} \mid x_0\right)\right).
\end{align}

We prove it by inducting on $i$. For the base case $i=1$, we can easily verify that $\pb{x_1 \mid x_0, y_{0}}$ satisfies \eqref{eq:appendix-recur}.

Suppose that \eqref{eq:appendix-recur} is true for $p\left(x_i \mid x_0, y_{[i-1]}\right)$. Then we know that 
\begin{align*}
& p\left(x_i \mid x_0, y_{[i-1]}\right)-\min _u p\left(x_i \mid X_0=u, y_{[i-1]}\right)  = \left(\prod_{k=1}^{i-1} \frac{1}{1-s_k}\right) \left(p\left(x_i \mid x_0\right)- \min_{x_0}\pb{x_i \mid x_0} \right)
\end{align*}
because $\sum_{x_{i-1}} p\left(x_i \mid x_{i-1}\right) \min_{x_0} p\left(x_{i-1} \mid x_0\right)$ does not depend on a particular value of $x_0$.
Then we consider $p\left(x_{i+1} \mid x_0, y_{[i]}\right)$ as follows.
\begin{align*}
    \pb{x_{i+1} \mid x_0, y_{[i]}}
    & = \frac{\sum_{x_i} p\left(x_{i+1} \mid x_i\right)\left(p\left(x_i \mid x_0, y_{[i-1]}\right)-\min _u p\left(x_i \mid X_0=u, y_{[i-1]}\right)\right)}{1-s_i} \\
    & = \frac{\sum_{x_i} p\left(x_{i+1} \mid x_i\right)\left(
   \prod_{k=1}^{i-1} \frac{1}{1-s_k}\right) \left(p\left(x_i \mid x_0\right)- \min_{x_0}\pb{x_i \mid x_0} \right)
    }{1-s_i} \\
     & = 
    \left(\prod_{k=1}^{i} \frac{1}{1-s_k}\right) \left(\sum_{x_i} p\left(x_{i+1} \mid x_i\right) p\left(x_i \mid x_0\right)-\sum_{x_i} p\left(x_{i+1} \mid x_i\right)  \min_{x_0}\pb{x_i \mid x_0} 
    \right) \\
    & = 
    \left(\prod_{k=1}^{i} \frac{1}{1-s_k}\right) \left(
    p\left(x_{i+1} \mid x_0\right)
    -\sum_{x_i} p\left(x_{i+1} \mid x_i\right)  \min_{x_0}\pb{x_i \mid x_0} 
    \right),
\end{align*}
which justifies our claim. 

By taking minimization of $X_0$ and summation over $X_i$ on both sides of \eqref{eq:appendix-recur}, we obtain that
\begin{align*}
    \sum_{x_i} \min _{x_0} p\left(x_i \mid x_0, y_{[i-1]}\right)
    & = \sum_{x_i} \min _{x_0}\left(\left(\prod_{k=1}^{i-1} \frac{1}{1-s_k}\right)\left(p\left(x_i \mid x_0\right)-\sum_{x_{i-1}} p\left(x_i \mid x_{i-1}\right) \min _{x_0} p\left(x_{i-1} \mid x_0\right)\right)\right) \\
    & =    \left(\prod_{k=1}^{i-1} \frac{1}{1-s_k}\right)
    \left(
    \sum_{x_i} \min _{x_0} p\left(x_i \mid x_0\right)- \sum_{x_i} \min _{x_0} \sum_{x_{i-1}} p\left(x_i \mid x_{i-1}\right) \min _{x_0} p\left(x_{i-1} \mid x_0\right)
    \right)   \\
     & \utag{a}{=}    \left(\prod_{k=1}^{i-1} \frac{1}{1-s_k}\right)
    \left(
    \sum_{x_i} \min _{x_0} p\left(x_i \mid x_0\right)- \sum_{x_i}  \sum_{x_{i-1}} p\left(x_i \mid x_{i-1}\right) \min _{x_0} p\left(x_{i-1} \mid x_0\right)
    \right) \\
    & \utag{a}{=}    \left(\prod_{k=1}^{i-1} \frac{1}{1-s_k}\right)
    \left(
    \sum_{x_i} \min _{x_0} p\left(x_i \mid x_0\right)-  \sum_{x_{i-1}}  \min _{x_0} p\left(x_{i-1} \mid x_0\right)
    \right), 
\end{align*}
where \uref{a} follows because $\sum_{x_{i-1}} p\left(x_i \mid x_{i-1}\right) \min _{x_0} p\left(x_{i-1} \mid x_0\right)$
does not depend on a particular value of $x_0$.

Thus, the above equality can be summarized as
\begin{align*}
\sum_{x_i} \min _{x_0} p\left(x_i \mid x_0, y_{[i-1]}\right) & =\left(\prod_{k=1}^{i-1} \frac{1}{1-s_k}\right)\left(\sum_{x_i} \min _{x_0} p\left(x_i \mid x_0\right)-\sum_{x_i} \sum_{x_{i-1}} p\left(x_i \mid x_{i-1}\right) \min _{x_0} p\left(x_{i-1} \mid x_0\right)\right) \\
& =\left(\prod_{k=1}^{i-1} \frac{1}{1-s_k}\right)\left(\sum_{x_i} \min _{x_0} p\left(x_i \mid x_0\right)-\sum_{x_{i-1}} \min _{x_0} p\left(x_{i-1} \mid x_0\right)\right)
\end{align*}

Recalling the definition of $s_i$, we obtain that
\begin{align}
    s_i=\left(\prod_{k=1}^{i-1} \frac{1}{1-s_k}\right)\left(\sum_{x_i} \min _{x_0} p\left(x_i \mid x_0\right)-\sum_{x_{i-1}} \min _{x_0} p\left(x_{i-1} \mid x_0\right)\right) 
    = \left(\prod_{k=1}^{i-1} \frac{1}{1-s_k}\right)\left(
    \alpha_i -\alpha_{i-1} \right)  . \label{eq13}
\end{align}

Finally, we prove the desired result $s_i  = \frac{\alpha_{i} - \alpha_{i-1}}{1 - \alpha_{i-1}}$
by induction on $i$. For the base case $i = 1$, we can see that 
$s_1 = \alpha_1 - \alpha_0 =  \frac{\alpha_{1} - \alpha_{0}}{1 - \alpha_{0}}$
by the fact that $\alpha_0 = 0$. Suppose that it is true for $i - 1$. Then, we have for $i$
\begin{align*}
    s_i 
    & = \left(\prod_{k=1}^{i-1} \frac{1}{1-s_k}\right)\left(
    \alpha_i -\alpha_{i-1} \right)  \\
    & =  \left(
    \prod_{k=1}^{i-1} 
    \frac{1 - \alpha_{k-1}}{1- \alpha_{k} } 
    \right)
    \left( \alpha_i -\alpha_{i-1} \right) \\
    & =  \frac{(1-\alpha_0)(\alpha_i -\alpha_{i-1})}{1- \alpha_{i-1} } \\
    & = \frac{\alpha_i -\alpha_{i-1}}{1- \alpha_{i-1} },
\end{align*}
as desired since $\alpha_0 = 0$ in the last step.


\newpage

    \end{document}